\newcommand{\code}[1]{\texttt{#1}}
\newcommand{\blind}{0}
\newtheorem{theorem}{Theorem}
\newtheorem{lemma}[theorem]{Lemma}
\newtheorem{remark}{Remark}
\newtheorem{example}{Example}
\newtheorem{definition}{Definition}
\begin{document}

\def\spacingset#1{\renewcommand{\baselinestretch}%
{#1}\small\normalsize} \spacingset{1}


\if0\blind
{
  \title{\bf Spatial von-Mises Fisher Regression for Directional Data}
  \author{Zhou Lan\thanks{zlan@bwh.harvard.edu
    }\hspace{.2cm}\\
    Brigham and Women's Hospital, Harvard Medical School\\
    and \\
    Arkaprava Roy\thanks{arkaprava.roy@ufl.edu
    }\hspace{.2cm} \\
    Department of Biostatistics, University of Florida\\
    and\\
    For The Alzheimer’s Disease Neuroimaging Initiative\thanks{Data used in the preparation of this article were obtained from the Alzheimer's Disease Neuroimaging Initiative
(ADNI) database (adni.loni.usc.edu). As such, the investigators within the ADNI contributed to the design
and implementation of ADNI and/or provided data but did not participate in analysis or writing of this report.
A complete listing of ADNI investigators can be found at:
\url{http://adni.loni.usc.edu/wp-content/uploads/how_to_apply/ADNI_Acknowledgement_List.pdf}}
    }
  \maketitle
} \fi

\if1\blind
{
  \bigskip
  \bigskip
  \bigskip
  \begin{center}
    {\LARGE\bf Spatial von-Mises Fisher Regression for Directional Data}
\end{center}
  \medskip
} \fi

\begin{abstract}
Spatially varying directional data are routinely observed in several modern applications such as meteorology, biology, geophysics, engineering, etc. However, only a few approaches are available for covariate-dependent statistical analysis for such data. To address this gap, we propose a novel generalized linear model to analyze such that using a von Mises Fisher (vMF) distributed error structure. Using a novel link function that relies on the transformation between Cartesian and spherical coordinates, we regress the vMF-distributed directional data on the external covariates. This regression model enables us to quantify the impact of external factors on the observed directional data. Furthermore, we impose the spatial dependence using an autoregressive model, appropriately accounting for the directional dependence in the outcome. This novel specification renders computational efficiency and flexibility. In addition, a comprehensive Bayesian inferential toolbox is thoroughly developed and applied to our analysis. Subsequently, employing our regression model on the Alzheimer's Disease Neuroimaging Initiative (ADNI) data, we gain new insights into the relationship between cognitive impairment and the orientations of brain fibers, along with examining empirical efficacy through simulation experiments. The code for implementing our proposed method is available on GitHub: \url{https://github.com/lanzhouBWH/Spatial_VMF_Regression}.
\end{abstract}

\noindent%
{\it Keywords:}  Bayesian Angular Inference, Diffusion tensor imaging, Monte Carlo Markov chain, Tangent-normal decomposition 
\vfill

\newpage
\spacingset{1.8} 

\section{Introduction}
\label{sec:intro}
Directional statistics, alternatively referred to as circular or angular statistics in two-dimensional cases or spherical statistics for higher dimensions, is a distinct subfield of multivariate statistics tailored explicitly for the analysis of directional data \citep{mardia2009directional}. Such data arise when measurements are expressed as angles or directions, inherently exhibiting spherical relationships. Instances of directional data can be found in diverse disciplines, including meteorology, biology, geophysics, and engineering, with examples ranging from wind direction and animal migration patterns to diffusion tensor imaging. In addition, a variety of external features often exist that control the directional outcome.

In this paper, our proposed methodological developments focus on studying the association between white matter fiber orientations in the brain and subject-level covariates (e.g., age, sex, cognitive score, and genetic information), revealing the factors driving fiber orientations \citep{schwartzman2005cross,schwartzman2008inference}.
In brain neuroimaging analyses, spatial dependencies become an important feature. Furthermore, the incorporation of spatial dependence is appealing in many directional data applications such as wind direction modeling \citep{neto2014accounting,murphy2022joint}, brain fiber orientation modeling \citep{goodlett2009group, zhu2011fadtts}, and modeling actigraph data from wearable devices \citep{banerjee2023bayesian}. Hence, to effectively analyze this type of data, it is important to employ a strategy that can address a) the inherent directional nature of the outcome; b) the potential spatial dependencies; and c) an effective and convenient summarization tool for the effects of the external factors associated with the outcome.

The fiber orientation data considered in this paper are derived from diffusion tensor imaging (DTI), a technique that has gained much popularity for its increasing efficiency and accuracy in measuring microstructures \citep{soares2013hitchhiker}. In the DTI imaging technique, a diffusion ellipsoid is commonly used, which represents diffusivity in each voxel. Here, we focus on the principal eigenvector $\bm{E}_1$, interpreted as the tangent direction along the fiber bundle (rightmost panel in Figure \ref{fig:fiber}). Thus, it is referred to as the \textit{principal diffusion direction} throughout the rest of the paper and is one of the most critical markers for studying white matter fibers. The association between principal diffusion directions and subjects' covariates has emerged as a perspective in the study of neurodegenerative diseases \citep{schwartzman2005cross,schwartzman2008inference}. With studies revealing that tissue properties may vary systematically along each tract, i.e., tract profiles of white matter properties \citep{yeatman2012tract}, it becomes an essential scientific question in investigating how covariate effects drive the principal diffusion directions along each tract.

    \begin{figure}
    \centering
    \includegraphics[width=0.5\textwidth]{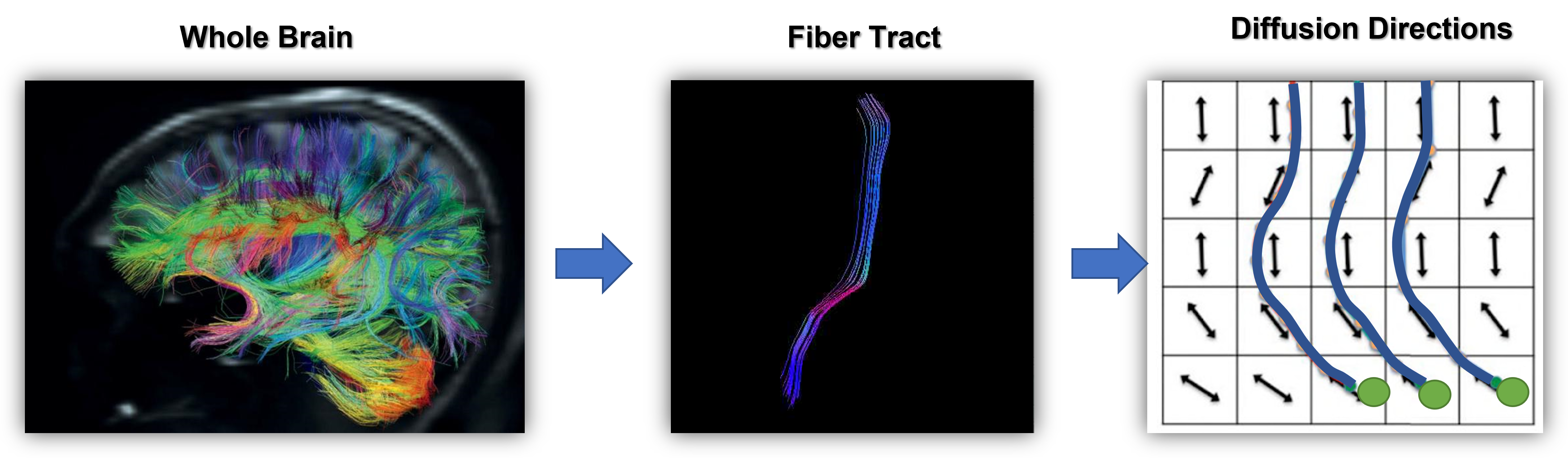}
    \caption{Macro to micro-view of the brain's anatomical structure: the left panel shows that the brain is connected through the streamlines of fiber tracts; the middle panel provides an example fiber tract; finally, the right panel shows the directional data characterizing the streamlines.}
    \label{fig:fiber}
\end{figure}

Traditional statistical methods, which rely on Euclidean space, are generally unsuitable for analyzing directional data due to their inability to account for the spherical relationships among values. To tackle these issues, directional statistics utilize a unique set of techniques \citep{gould1969regression} and probability distributions \citep{johnson1978some}. Several regression approaches have been developed for handling circular or 2-dimensional directional outcome data \citep{gould1969regression,johnson1978some,fisher1992regression,presnell1998projected}. However, these approaches are not generalizable for spherical outcomes where the data sits on three or higher-dimensional spheres \citep{mardia2009directional, paine2020spherical}. A regression method for analyzing such data was first introduced in \cite{chang1986spherical}, and subsequent developments include \citep{downs2003spherical, rosenthal2014spherical, scealy2019scaled, paine2020spherical}, and a few other approaches are proposed relying on the properties of a homogeneous Riemannian manifold \citep{hinkle2014intrinsic, cornea2017regression}. \cite{pewsey2021recent} provided a thorough review in this context. Nevertheless, none of these approaches is suitable for studying spatially varying directional outcomes.

In this paper, we develop a novel regression approach to infer the effects of several external factors on spatially varying directional data as an outcome. Notationally, let a $p$-variate directional data measured at the replication $i$ and location $v$, denoted as $\bm{E}_{iv}\in \mathcal{S}^{p-1}$, be the outcome of interest, and let a covariate vector $\bm{X}_{i}=[1, {X}_{i1}, \ldots, {X}_{ic}, \ldots, {X}_{iC}]$ represent the array of predictors. Here $\mathcal{S}^{p-1}$ stands for the $p$-dimensional sphere. Our objective is to regress $\bm{E}_{iv}$ on $\bm{X}_i$ for statistical inference purposes. In summary, we propose a generalized regression model for unit vector-valued responses, incorporating a novel link function based on the transformation between Cartesian and spherical coordinates. This link function enables us to project the directional data onto Euclidean space, thus allowing us to model the covariate effects efficiently.

As discussed in \cite{paine2020spherical}, the regression models for angular data or $\mathcal{S}^1$ are not easily extendable for $\mathcal{S}^2$ or higher-dimensional spheres. The fact that the intercept takes care of the rotational invariance as in \cite{fisher1992regression} is very specific to the angular data only. In our model, such modification would not make the estimate rotation invariant as the response is in $\mathcal{S}^2$. Following model Structure 2 of \cite{paine2020spherical} and Section 4.3.1 of \cite{scealy2019scaled}, we add an unknown rotation matrix $\bm{Q}$ as a parameter that allows the inference not to depend on the coordinate system. Unlike \cite{scealy2019scaled}, we put a prior and sample this orthogonal matrix within our MCMC since in a spatial setting, their approach for the pre-specification of this matrix may not be appropriate.
Thus, the posterior samples of the regression coefficients become dependent on the posterior samples of the orthogonal matrix. 
Hence, it is not straightforward to use the regression coefficients directly for \textit{scientific} inference. We thus consider a tangent-normal decomposition-based inference framework instead of using the regression coefficients directly. Specifically, we summarize the change in directional-valued mean upon a one-unit change in predictors in terms of the magnitude of the tangent-normal component and use those summarizations to quantify the importance of different predictors. Further details are provided in Section~\ref{sec:inference}. Moreover, the inclusion of this orthogonal matrix improves flexibility along the lines of \cite{kent2006complex, scealy2019scaled, paine2020spherical} and also the model performance, as shown in our simulation results in Section \ref{sec:num}.

Another important component to be addressed is the aspect of complex spatial dependence. In many modern applications, the directional data are spatially correlated, and the spatial dependence presents a different and unique pattern, i.e., the spatial dependence of this type of directional data is usually along \textit{streamlines} \citep{yeatman2012tract}. The \textit{streamlines} have different meanings in different applications. In animal migration, \textit{streamlines} are the animal migration pathways \citep{mahan1991circular}. In DTI, the \textit{streamlines} are the fiber tract \textit{streamlines}, i.e., the blue curves in the middle panel of Figure \ref{fig:fiber}. In \citet{goodlett2009group, zhu2011fadtts}, the authors proposed methods that induce spatial dependence by considering arc length distances along a fiber while analyzing the fiber orientation data. Such dependence can also be observed in our motivating DTI data (see Section A in the supplementary materials). Motivated by this, we impose the spatial dependence along the \textit{streamlines}, relying on an autoregressive model. For our Alzheimer's Disease Neuroimaging Initiative (ADNI) data analysis, the fiber tract \textit{streamlines}, based on the tractography atlases from \cite{yeh2018population} are used.
The tractography atlases provided by \citet{yeh2018population} offer inter-voxel connectivities along a given fiber tract, the fornix (see Figure \ref{fig:atlas}). These are estimated using high-quality diffusion-weighted images from healthy subjects. Fornix's tract profile has been studied in many other studies \citep{valdes2020diffusion,cahn2021diffusion}.

    \begin{figure}
    \centering
    \includegraphics[width=0.2\textwidth]{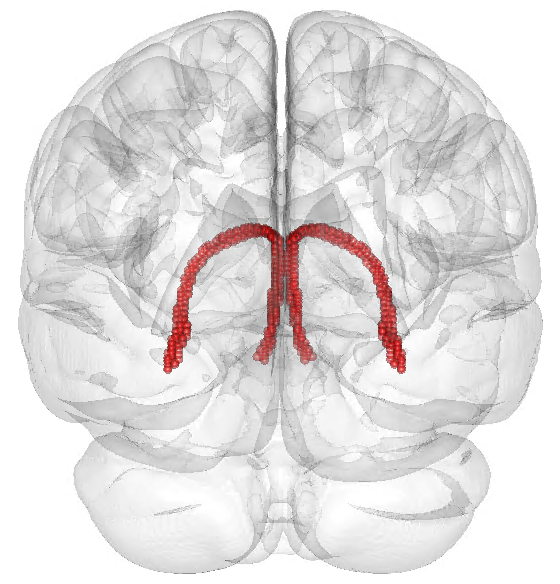}
    \caption{Tractography atlas of fornix .}
    \label{fig:atlas}
\end{figure}

In the rest of the paper, we first illustrate the model construction in Section \ref{sec:construction}. The proposed Bayesian angular inference is given in Section \ref{sec:inference}. To evaluate the performance of our model, we provide numerical studies in Section \ref{sec:num}. Finally, we conclude and discuss in Section \ref{sec:conclusion}. 

\section{Model Construction}
\label{sec:construction}
In this section, we provide the details of the model construction. Section \ref{sec:dist} describes the distribution that serves as the noise model for directional data. Subsequently, Section \ref{sec:link} introduces the link function, which allows us to relate a linear combination of the predictors to the response variables. Finally, we describe the model for spatial dependence in Section \ref{sec:sp_dep}.

\subsection{vMF Distribution}
\label{sec:dist}
A random $p$-dimensional directional data point, $\bm{E}_{iv}$, lies on $\mathcal{S}^{p-1}$, as described in Section \ref{sec:intro}. The von Mises-Fisher (vMF) distribution \citep{mardia2009directional} is a commonly used probability distribution for handling directional data, and spherical regression models are built based on the vMF distribution. However, building spherical regression models is not straightforward since the inference might depend on an arbitrary coordinate system in which the responses are defined. Following the approaches in \citet{scealy2019scaled,paine2020spherical}, We assume that $\bm{Q}^T\bm{E}_{iv}$ follows a von Mises-Fisher (vMF) distribution \citep{mardia2009directional}, denoted as $\bm{Q}^T\bm{E}_{iv}\sim\text{vMF}\big(\bm{\mu}_{iv},\kappa\big)$ with the probability density function expressed as
\begin{equation}
\label{eq:density}
\begin{aligned}
&f\Big[\bm{Q}^T\bm{E}_{iv}|\bm{\mu}_{iv}, \kappa\Big]=C_p(\kappa) \exp\Big(\kappa \bm{\mu}_{iv}^T\bm{Q}^T\bm{E}_{iv}\Big)=C_p(\kappa) \exp\Big(\kappa \mathcal{M}_{iv}^T\bm{E}_{iv}\Big),
\end{aligned}
\end{equation}
where $\mathcal{M}_{iv}=\bm{Q}\bm{\mu}_{iv}$. In the expression, $C_p(\kappa)$ is the normalizing constant such as $C_{p}(\kappa )={\frac {\kappa ^{p/2-1}}{(2\pi )^{p/2}I_{p/2-1}(\kappa )}}$, where $I_{a}$ stands for the modified Bessel function of the first kind at order $a$. In this paper, we primarily focus on the case with $p=3$, due to the nature of the DTI data, and in that case, $C_3(\kappa)=\frac{\kappa}{2\pi (e^{\kappa}-e^{-\kappa})}$ 
The term $\bm{\mu}_{iv}$ is a directional vector, i.e., $\|\bm{\mu}_{iv}\|=1$, where $\|\cdot\|$ represents the $\ell_2$ norm. $\kappa\geq 0$ is a concentration parameter. The term $\bm{Q}$ is set as an unknown parameter in our model and modeled as a special orthogonal matrix. 
The density is maximized at $\bm{E}_{iv}=\bm{Q}\bm{\mu}_{iv}$, the mode direction parameter. Thus, we further introduce the notation $\mathcal{M}_{iv}$
and it is the mode direction for the \textit{original} response $\bm{E}_{iv}$, and is named as \textit{direct} mode direction to differentiate it from $\bm{\mu}_{iv}$ which is called  \textit{rotated} mode direction.
 In the next subsection, we propose a link function to relate predictors with $\bm{\mu}_{iv}$. Since the link function is known and pre-specified, the orthogonal matrix $\bm{Q}$ here provides additional flexibility by identifying an optimal coordinate system with respect to the link.

\begin{figure}
    \centering

            \begin{subfigure}{0.4\textwidth}
                \centering
    \includegraphics[width=0.5\textwidth]{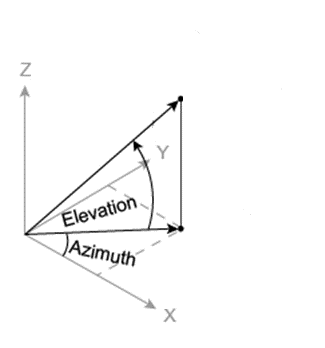}
    \caption{The graphical illustration describes the transformation between Cartesian and spherical coordinates. The azimuth angle and elevation angle are labeled in the Cartesian coordinate system.}
    \label{fig:coordinates}
     \end{subfigure}  
       \hfill  
     \begin{subfigure}{0.4\textwidth}
    \centering
    \includegraphics[width=0.5\textwidth]{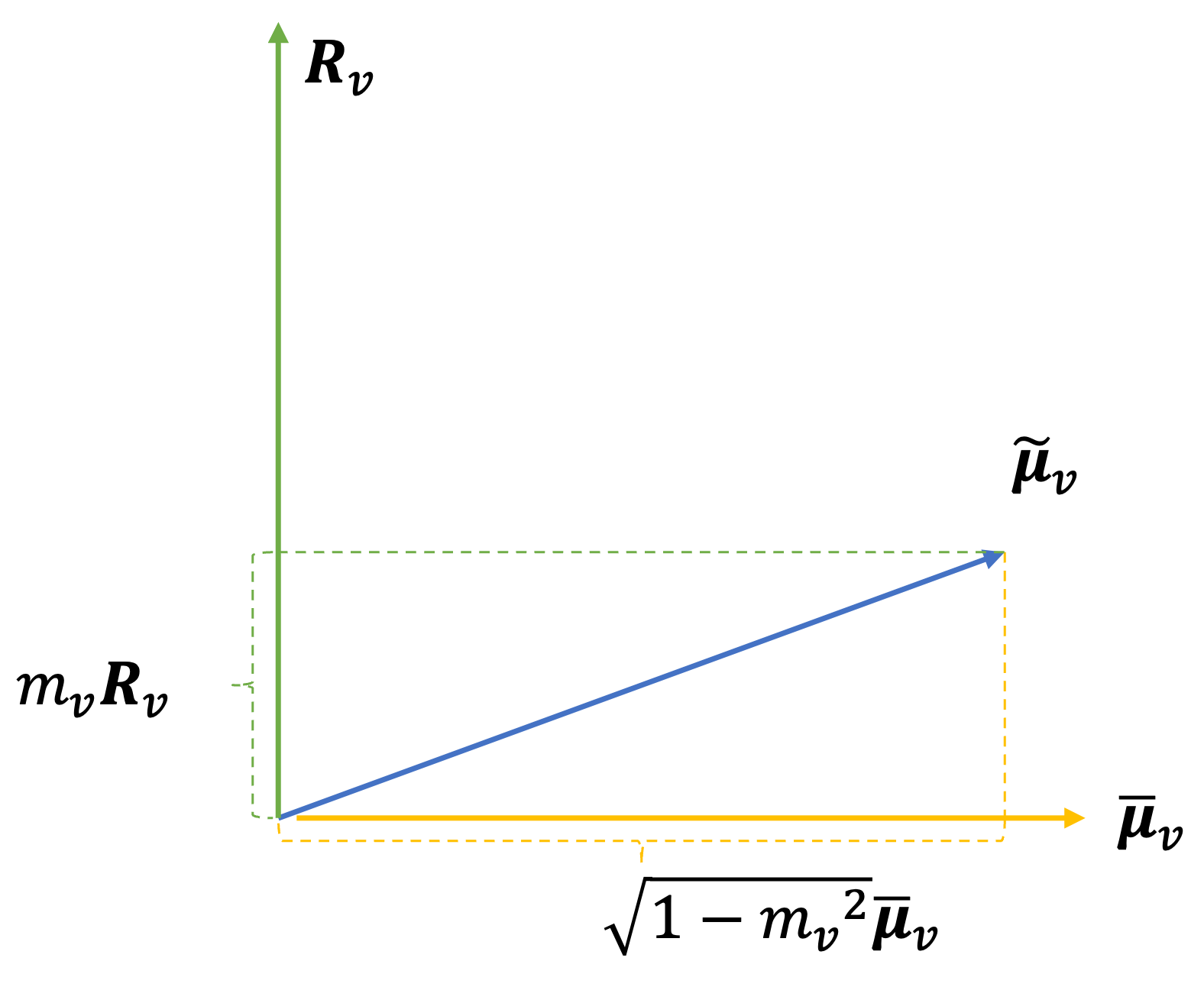}
    \caption{The graphical illustration of tangent-normal decomposition of covariate effects. The yellow arrow is the \textit{typical} mode direction; the blue arrow is the \textit{specific} mode direction; the green arrow is the \textit{deviation} direction; The scalar $m_{v}\in(0,1)$ controls the magnitude of this effect.}
    \label{fig:Tangent_Normal}
    \end{subfigure}  
    \caption{The graphical illustrations of directional statistics.}
\end{figure}

\subsection{Linking to Predictors}
\label{sec:link}

In accordance with the generalized linear regression (GLM) framework, we propose a relationship between a linear combination of the predictors and the response variables through a link function. 
It is also reasonable to allow the linear combination of the predictors to be unbounded, mirroring other GLM frameworks. Therefore, our proposed link function is a composition of two functions, each handling a different level of transformations. The following section provides detailed insight into the construction of the link function.

In this section, we focus on our link function for 3-dimensional spherical coordinates, i.e., when $\bm{E}_{iv}$ lies on $\mathcal{S}^{p-1}$ and $p=3$. Let $(x,y,z)$ represent the Cartesian coordinates of a point on the sphere such that $x^2+y^2+z^2=1$. This point can be represented in spherical coordinates using two parameters, as depicted in Figure \ref{fig:coordinates}. The azimuth angle $\theta\in (-\pi,\pi]$ represents the counterclockwise angle in the x-y plane from the positive x-axis, and the elevation angle $\phi\in (-\frac{\pi}{2},\frac{\pi}{2}]$ is the angle from the x-y plane.
The one-to-one transformation from Cartesian to spherical coordinates is as follows:
$$u:(x,y,z)\rightarrow \Big(\text{atan2}(y,x),\text{atan2}(z,\sqrt{1-z^2})\Big)=(\theta,\phi),$$
where $\text{atan2}(y,x)$ is defined by the limit expression $\lim_{z\rightarrow x^+}(\frac{y}{z})+\frac{\pi}{2}\text{sgn}(y)\text{sgn}(x)(\text{sgn}(x)-1)$.
While the spherical coordinates $(\theta,\phi)$ are located in Euclidean space, they are bounded. Hence, we first scale the spherical coordinates $(\theta,\phi)$ to fall within the interval $(0,1]$, then apply the logit function. The transformation is:
$$g:(\theta,\phi)\rightarrow \Big(\text{logit}\big(\frac{\theta+\pi}{2\pi}\big),\text{logit}\big(\frac{\phi+\pi/2}{\pi}\big)\Big)=(\widetilde{\theta},\widetilde{\phi})\in \mathbb{R}\times\mathbb{R}.$$ 
Here, $\text{logit}(x)=\log\left(\frac{x}{1-x}\right)$. Although the logit transformation ill-behaves on the boundary (i.e., at $x=1$), the regression model built on this transformation is still adequately flexible, as the set of such degenerate cases has measure zero. This transformation serves to map the bounded spherical coordinates to a new set of parameters $(\widetilde{\theta},\widetilde{\phi})$, which are unbounded.
For clarity, the overall mapping from $(x,y,z)$ to $(\widetilde{\theta},\widetilde{\phi})$ is denoted by $\ell(\cdot)$, which is a composite of $g(\cdot)$ and $u(\cdot)$, i.e., $\ell(\cdot):=u(\cdot)\circ g(\cdot)$. Thus, $\ell(x,y,z)=[\widetilde{\theta},\widetilde{\phi}]$ represents this innovative function that projects the directional data from $\mathcal{S}^{2}$ to $\mathbb{R}\times\mathbb{R}$.

A point on a $p$ dimensional hypersphere, $\mathcal{S}^{p-1}$, can be expressed using $p-1$ angles. The first $p-2$ angles (denoted as ${\bm{\phi}}^{(p-2)}=[{{\phi}}_1^{(p-2)}, \ldots, {{\phi}}_{p-2}^{(p-2)}]$) are within the interval $(-\pi/2,\pi/2]$, while the last angle ${\theta}$ is within $(-\pi,\pi]$. 
The mapping function, when extended to the general dimension, is formulated as: 
\begin{equation*}
    \begin{aligned}
        &g:(\theta,\bm{\phi}^{(p-2)})\rightarrow \Big(\text{logit}\big(\frac{\theta+\pi}{2\pi}\big),\big[\text{logit}\big(\frac{{\phi}_1^{(p-2)}+\pi/2}{\pi}\big),\ldots,\text{logit}\big(\frac{{\phi}_{p-1}^{(p-2)}+\pi/2}{\pi}\big)\big]\Big)\\
        &=(\widetilde{\theta},\big[\widetilde{{\phi}}_1^{(p-2)},\ldots,\widetilde{{\phi}}_{p-2}^{(p-2)}\big])\in \mathbb{R}\times\mathbb{R}^{p-2}.
    \end{aligned}
\end{equation*}

In our proposed generalized regression model, the link function $\ell(\cdot)$ plays a key role. The term $\ell(\bm{\mu}_{iv})=(\widetilde{\theta},\big[\widetilde{{\phi}}_1^{(p-2)},\ldots,\widetilde{{\phi}}_{p-2}^{(p-2)}\big])$ is considered as the prediction terms and is formulated as a function of the covariate vector $\bm{X}_{i}$. Specifically, we have:
$$\mathbb{E}(\widetilde{\theta},\big[\widetilde{{\phi}}_1^{(p-2)},\ldots,\widetilde{{\phi}}_{p-2}^{(p-2)}\big])=(\bm{X}_{i}\bm{\alpha}_{v}, [\bm{X}_{i}\bm{\beta}^{(p-2)}_{v,1},\ldots,\bm{X}_{i}\bm{\beta}^{(p-2)}_{v,p-2}]).$$
Lemma \ref{lemma:bi} illustrates that the link function $\ell(\cdot)$ is a bijective function. This provides us with a direct mapping between the coordinates $(\widetilde{\theta},\big[\widetilde{{\phi}}_1^{(p-2)},\ldots,\widetilde{{\phi}}_{p-2}^{(p-2)}\big])$ and the mode direction $\bm{\mu}_{iv}$. This mapping is crucial in facilitating the interpretation of the regression coefficients.
\begin{lemma}
\label{lemma:bi}
The function $\ell(\cdot): \mathcal{S}^{p-1}\rightarrow\mathbb{R}^{p-2}\times\mathbb{R}$ is bijective.
\end{lemma}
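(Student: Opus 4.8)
The plan is to exhibit $\ell$ as a composition of two maps, each of which is a bijection onto the appropriate intermediate space, and then invoke the fact that a composition of bijections is a bijection. Recall that $\ell = g\circ u$ (apply $u$ first, then $g$). Write $\mathcal{B} := (-\pi,\pi)\times(-\pi/2,\pi/2)^{p-2}$ for the box of angular coordinates. I would state and prove the lemma on the full-measure open subset $\mathcal{S}^{p-1}_\circ\subset\mathcal{S}^{p-1}$ obtained by deleting the two poles and the longitude-$\pi$ half-meridian; this is exactly the restriction the paragraph preceding the lemma already flags as harmless, and on its complement (a null set) $u$ is not one-to-one and $\text{logit}$ takes the value $\pm\infty$.

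First, for $u\colon\mathcal{S}^{p-1}_\circ\to\mathcal{B}$, I would write down the classical inverse explicitly, namely the spherical-to-Cartesian map $h\colon\mathcal{B}\to\mathcal{S}^{p-1}_\circ$ given for $p=3$ by $h(\theta,\phi)=(\cos\phi\cos\theta,\ \cos\phi\sin\theta,\ \sin\phi)$ and by the usual products of cosines with a single terminal sine in general dimension, and then verify $u\circ h=\mathrm{id}_{\mathcal{B}}$ and $h\circ u=\mathrm{id}_{\mathcal{S}^{p-1}_\circ}$. The verification reduces to the standard identities $\text{atan2}(\sin t,\cos t)=t$ for $t\in(-\pi,\pi)$ and, for the elevation, $\text{atan2}(\sin\phi,\sqrt{1-\sin^2\phi})=\text{atan2}(\sin\phi,\cos\phi)=\phi$ for $\phi\in(-\pi/2,\pi/2)$, which holds because $\cos\phi>0$ there; surjectivity of $h$ onto $\mathcal{S}^{p-1}_\circ$ and the Pythagorean normalization $x^2+y^2+z^2=1$ are immediate from the definition of $h$.

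Second, for $g\colon\mathcal{B}\to\mathbb{R}\times\mathbb{R}^{p-2}$, I would note that $g$ acts coordinatewise and that each coordinate is an increasing affine rescaling of the corresponding angle onto the open unit interval $(0,1)$, followed by $\text{logit}$. Since $\text{logit}(x)=\log(x/(1-x))$ is a strictly increasing continuous bijection from $(0,1)$ onto $\mathbb{R}$ with inverse the sigmoid $\sigma(t)=1/(1+e^{-t})$, composing $\sigma$ with the inverse of the affine rescaling gives a two-sided inverse of each coordinate of $g$, hence of $g$. Combining the two steps, $\ell^{-1}=h\circ g^{-1}$ is a two-sided inverse of $\ell$ on the stated domain, which proves the claim.

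The main obstacle is not depth but bookkeeping at the boundary: the azimuth is genuinely ambiguous at the poles $z=\pm1$ and the rescaled angle hits an endpoint of $(0,1)$ (so $\text{logit}$ diverges) precisely on the half-meridian $\{\theta=\pi\}$ and, depending on the convention used for the elevation, at $\phi=\pm\pi/2$. I would deal with this exactly as the text suggests — restrict to the full-measure open set $\mathcal{S}^{p-1}_\circ$ where both $u$ and $g$ are honest bijections, and remark that this is inconsequential for the regression model since the excluded set is $\text{vMF}$-null. An equivalent alternative, worth mentioning, is to adopt the convention that the degenerate directions map to the corresponding points of the extended real line; either way the one-to-one correspondence that underlies the interpretation of the regression coefficients holds off a null set.
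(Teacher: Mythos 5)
Your proposal is correct and takes essentially the same route as the paper: the paper's entire proof is the one-line observation that $\ell$ is a composition of $u$ and $g$ and that a composition of bijections is a bijection. The only substantive difference is that you actually verify the two sub-claims (exhibiting the spherical-to-Cartesian inverse of $u$ and the sigmoid inverse of each coordinate of $g$) and, more importantly, you make explicit the restriction to the full-measure open subset of $\mathcal{S}^{p-1}$ excluding the poles and the $\theta=\pi$ half-meridian --- a restriction the lemma as literally stated silently requires (since $u$ is not injective at the poles and the rescaled angle hits an endpoint of $(0,1)$ on the excluded meridian, where $\mathrm{logit}$ diverges), and which the paper only acknowledges informally in the paragraph preceding the lemma rather than in the proof itself.
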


\begin{proof}
We know $\ell(\cdot):=u(\cdot)\circ g(\cdot)$ where $\circ$ is function composition. Since $u(\cdot)$ and $g(\cdot)$ are both bijective, $\ell(\cdot)$ is bijective.
\end{proof}
For the remainder of this article, in view of our DTI application and to simplify the exposition, we set $p=3$, 
i.e., $\mathbb{E}[\widetilde{\theta}_{iv},\widetilde{\phi}_{iv}]^T=[\bm{X}_{i}\bm{\alpha}_{v}, \bm{X}_{i}\bm{\beta}_{v}]^T,$ and drop the superscript $(p-2)$. Despite this simplification, it should be noted that it is straightforward to generalize all the steps we take here to accommodate any general $p$. 

There are several other link functions proposed in spherical-real regression literature. However, the common theme is always to link the individual-specific mean with a linear combination of predictors, as in any generalized linear model. Here, we review them for completeness. In each case, we mention the possible issues for clarity, not as a form of criticism. Due to the complexity of regressing a spherical response on real predictors, the proposed link functions do have some mathematical restrictions. However, they are all shown to work well empirically. In \cite{scealy2017directional} and \cite{scealy2019scaled}, the mean unit vector is first transformed to simplex-valued data by taking an element-wise square. Subsequently, a link to the predictor space is constructed through a soft-max transformation. Hence, this only works when the means lie on the positive orthant. Although given a fixed voxel, all the angles for all the subjects are within a very small range, they are not all on the positive orthant in our DTI application. 

There are two alternative links proposed in \cite{paine2020spherical}. However, the link in structure 1 described in \cite{paine2020spherical} is designed specifically for $\mathcal{S}^2$-valued response following a Kent or ESAG distribution that includes three unit vectors as parameters along with a few additional parameters. The three unit vectors in this case form an orthogonal matrix, and a Cayley transformation is used to map it to the unrestricted real space to develop the regression model. Mathematically, it is still restrictive, as the mapping can often non-trivially lead to large separations in the underlying skew-symmetric matrices even when the cross-product of the two resulting special orthogonal matrices is almost close to identity. Structure 2 described in \cite{paine2020spherical} can be applicable within our proposed vMF distribution modeling as well, specifying $\bm{\mu}_{iv}$ as $\frac{\bm{B}_v\mathbf{X}_i}{\|\bm{B}_v\mathbf{X}_i\|}$ where $\bm{B}_v$ is a $3\times (C+1)$ dimensional coefficient matrix.
Mathematically, the unit norm projection can cause two very different vectors, $\bm{B}_v\mathbf{x}_i$ and $\bm{B}_v\mathbf{x}_j$, to appear very close to each other on a sphere after the unit-norm projection, when either or both of the two norms $\|\bm{B}_v\mathbf{x}_i\|$ and $\|\bm{B}_v\mathbf{x}_j\|$ are more than 1. Conversely, it can also do the opposite when their norms are both less than 1.
The unit norm projection (i.e., structure 2 link) is claimed to have performed better than the first in \cite{paine2020spherical}. 
In summary, while these links exhibit certain mathematical limitations, they have all been shown to be effective in practical applications. Specifically, as long as the unit vectors do not exhibit significant variation, all these links work. In practice, such variations are generally minimal.

In the supplementary Section B, we provide a comparison to the above unit-norm projection link and the corresponding implementation details. We find our proposed link performed better than the unit norm projection-based link.

\subsection{Spatial Dependence}
\label{sec:sp_dep}
Spatial dependence is an essential component in many applications. In this section, we further illustrate the model specification, which accounts for spatial dependence. 
By preserving $\mathbb{E}[\widetilde{\theta}_{iv},\widetilde{\phi}_{iv}]^T=[\bm{X}_{i}\bm{\alpha}_{v}, \bm{X}_{i}\bm{\beta}_{v}]^T$, we let
$[\widetilde{\theta}_{iv},\widetilde{\phi}_{iv}]^T=[\bm{X}_{i}\bm{\alpha}_{v}+\epsilon_{iv}, \bm{X}_{i}\bm{\beta}_{v}+\xi_{iv}]^T$, where $\bm{\epsilon}_{i}=(\epsilon_{1i}, \ldots, \epsilon_{iv}, \ldots, \epsilon_{iV})$ and $\bm{\xi}_{i}=(\xi_{1i}, \ldots, \xi_{iv}, \ldots, \xi_{iV})$ are distributed as the mean-zero Gaussian distribution with covariance matrices $\bm{\Sigma}^{(\epsilon)}$ and $\bm{\Sigma}^{(\xi)}$, respectively. 

The focus of our applications is principal directional data, where the spatial dependence is along a given fiber streamline. Suppose we have $K$ fiber streamlines, $\mathbb{Z}_k$ is a set of consecutive indices within the $k$-th fiber streamline. We let the unit vectors along a given streamline be sequentially autocorrelated.
For completeness, we first state the definition and notations related to the Gaussian autoregressive modeling in Definition \ref{def:ar}.
\begin{definition}[AR($P$) Process]
\label{def:ar}
Given $A_t=A_{t-1}\Phi_1+\ldots+A_{t-P}\Phi_P+\epsilon_t$ and $\epsilon_t\sim\mathcal{N}(0,\sigma^2)$, for $t\in\mathbb{Z}$. For $t\in\mathbb{Z}$, we define that $\{A_t:t\in\mathbb{Z}\}$ follows a Gaussian AR-$P$ process with correlation parameters ($\Phi_1, \ldots, \Phi_P$) and variance $\sigma^2$, denoted as $\text{AR}_P\Big([\Phi_1,\ldots,\Phi_P],\sigma^2\Big)$.
\end{definition}
We thus have $\bm{\epsilon}_{i}=(\epsilon_{1i}, \ldots, \epsilon_{iv}, \ldots, \epsilon_{iV})$ and $\bm{\xi}_{i}=(\xi_{1i}, \ldots, \xi_{iv}, \ldots, \xi_{iV})$ follow a multivariate normal distribution with an autoregressive covariance matrix, denoted as
\begin{equation}
    \begin{aligned}
    \bm{\epsilon}_{ik}=\{\epsilon_{iv}: v\in \mathbb{Z}_k\}\sim\text{AR}_P\Big([\Phi_{\epsilon 1},\ldots,\Phi_{\epsilon P}],\tau_{\epsilon}^2\Big)\\
    \bm{\xi}_{ik}=\{\xi_{iv}: v\in \mathbb{Z}_k\}\sim\text{AR}_P\Big([\Phi_{\xi 1},\ldots,\Phi_{\xi P}],\tau_{\xi}^2\Big),
    \end{aligned}
\end{equation}
where $(\Phi_{\epsilon,1},\ldots,\Phi_{\epsilon,P},\tau_{\epsilon}^2)$ and $(\Phi_{\xi,1},\ldots,\Phi_{\xi,P},\tau_{\xi}^2)$ are the corresponding parameters. Note that our characterization of the AR$(P)$ process is based on the partial correlations but not the autoregressive coefficients directly.

\subsection{Prior Settings}
\label{sec:prior}
We now put priors on the model parameters to proceed with our Bayesian inference. The matrix $\bm{Q}$ is first reparametrized using the Cayley transformation as $\bm{Q}=(\bm{I}-\bm{A})(\bm{I}+\bm{A})^{-1}$, where $\bm{A}$ is a skew-symmetric matrix. This parametrization is useful for efficient posterior computation. Setting $\bm{A}=\begin{bmatrix}
   0 & a_1 & a_2  \\
    -a_1 & 0 & a_3\\
    -a_2 & -a_3 & 0
\end{bmatrix}$, the priors for the three parameters $a_1,a_3,a_3$ are set as mean-zero normal with variance $100$. For the regression coefficients, we also assign AR-$P$ process priors, denoted as
\begin{equation}
    \begin{aligned}
        \bm{\alpha}_{k}(c)=\{\alpha_{v}(c): v\in \mathbb{Z}_k\}\sim\text{AR}_P\Big(
        [\Phi_{\alpha,1},\ldots,\Phi_{\alpha,P}],\sigma_\alpha^2\Big)\\
     \bm{\beta}_{k}(c)=\{\beta_{v}(c): v\in \mathbb{Z}_k\}\sim\text{AR}_P\Big([\Phi_{\beta,1},\ldots,\Phi_{\beta,P}],\sigma_\beta^2\Big),
    \end{aligned}
\end{equation}
where $\bm{\alpha}_{v}=\Big({\alpha}_{v}(0), \ldots, {\alpha}_{v}(c), \ldots, {\alpha}_{v}(C)\Big)$ and $\bm{\beta}_{v}=\Big({\beta}_{v}(0), \ldots, {\beta}_{v}(c), \ldots, {\beta}_{v}(C)\Big)$. The terms ${\alpha}_{v}(0)$ and ${\beta}_{v}(0)$ are the coefficients for the intercepts. The terms ${\alpha}_{v}(c)$ and ${\beta}_{v}(c)$ are the coefficients for the covariate $c$.
The variances are assigned with weakly informative inverse-gamma priors with shape and rate parameters set to 0.1, denoted as $\tau_{\epsilon}^{-2}, \tau_{\xi}^{-2}, \sigma_{\alpha}^{-2}, \sigma_{\beta}^{-2}\sim\mathcal{GA}(0.1,0.1)$. We put a diffuse prior for the concentration parameter, denoted as $\kappa\propto 1$. The concentration parameter $\kappa$ can be viewed as a nugget effect. 
An autoregressive process with the partial autocorrelation parameters restricted to $(-1,1)$ can ensure stationarity \citep{barnett1996bayesian}. 
Therefore, we assign a uniform $(-1,1)$ prior on the partial autocorrelation parameters. 
Specifically, $\rho_{\alpha,p}, \rho_{\beta,p},\rho_{\epsilon,p}, \rho_{\xi,p}\sim\mathcal{U}(-1,1)$ for $p\in\{1,2,\ldots,P\}$, where $\rho_{\alpha,p}, \rho_{\beta,p},\rho_{\epsilon,p}, \rho_{\xi,p}$ are the corresponding partial autocorrelation parameters associated with the autoregressive process priors for $\bm{\alpha}_{v}(c)$'s, $\bm{\beta}_{v}(c)$'s, and the autoregressive random effects $\bm{\epsilon}_{ik}$'s, and $\bm{\xi}_{ik}$'s, respectively. 
Applying the Durbin-Levinson recursion on the partial autocorrelations ($\rho_{\alpha,p}, \rho_{\beta,p},\rho_{\epsilon,p}, \rho_{\xi,p}$), we get back the autoregressive coefficients ($\Phi_{\alpha,p}, \Phi_{\beta,p},\Phi_{\epsilon,p}, \Phi_{\xi,p}$) using the \code{R} function \code{inla.ar.pacf2phi()}  from the \code{R} package \code{inla} \citep{lindgren2015bayesian}. The reverse transformation is applied using \code{inla.ar.phi2pacf()} from the same package.
We utilize a Markov chain Monte Carlo (MCMC) algorithm, which includes Metropolis-Hastings-within-Gibbs steps, to generate the posterior samples needed for our Bayesian inference.

\section{Bayesian Angular Inference}
\label{sec:inference}
In this section, we present our novel Bayesian angular inference framework for directional data analysis. The interpretation of the regression coefficients in our model is not as straightforward as it is in linear regression. The proposed inferential framework aims to address this complexity in inference. The main difficulty arises when trying to illustrate the effects of covariates.

\subsection{Posterior Predictive Distribution of \textit{Direct} Mode Direction}
\label{sec:AE}

Inferring the posterior predictive distribution of \textit{direct} mode direction is the main goal in Bayesian angular inference. Given a subject with the covariate vector $\bm{X}_{0}$, the posterior predictive distribution of the \textit{direct} mode direction $\bm{\mu}_{0,v}$ for the $v$-th index, is $\Big[\mathcal{M}_{0,v}|\bm{X}_{0};\text{data}\Big]$. The MCMC samples $\{\mathcal{M}_{0,v}^{(t)}:t=1:T\}$ are the realizations of the posterior distribution. Following \citet[][Section 15.2]{mardia2009directional}, we use the $\frac{\sum_{t=1}^T\mathcal{M}_{\text{new},v}^{(t)}/T}{\|\sum_{t=1}^T\mathcal{M}_{\text{new},v}^{(t)}/T\|}$ as the posterior summary for the \textit{direct} mode direction $\bm{\mu}_{0,v}$. Relating to the real applications, the above term profiles the expected fiber direction after the model-based adjustment for the covariate vector $\bm{X}_{0}$. The posterior summarization can be used for the subsequent tangent-normal decomposition for covariate effects.

\subsection{Tangent-Normal Decomposition for Covariate Effects}
\label{sec:TN}
A transparent inference framework illustrating the covariate effects is essential for any scientific study.
In our proposed model, the coefficients $\bm{\alpha}_{v}$ and $\bm{\beta}_{v}$ do not offer a straightforward illustration of the covariate effects on the {mode} directions. 
This is also a limitation of many regression models for complicated responses.
To overcome this, we incorporate the tangent-normal decomposition \citep[][Page 169]{mardia2009directional} to summarize the covariate effects on directional data as follows. 
The tangent-normal decomposition is an integrated approach, but can be decomposed into the following three steps for a concise illustration.

Our proposed procedure starts by defining the following two covariate vectors: a) $\bm{X}_{\text{T}}=[1, {X}_{\text{T}1},\ldots ,{X}_{\text{T}c}, \ldots, {X}_{\text{T}C}]$: Covariate vector for the \textit{typical} condition; b) ${\bm{X}}_{\text{S}}=[1, {X}_{\text{S}1},\ldots ,{X}_{\text{S}c}, \ldots, {X}_{\text{S}C}]$: Covariate vector for the \textit{specific} condition. Thus, we replace the original subscript ``$i$" with ``T" and ``S" in our notation, to represent a subject in the \textit{typical} and \textit{specific} condition, respectively. 

Next, we obtain the corresponding predictive posterior distributions of the {mode} directions with the given \textit{typical} or \textit{specific} covariate vector. They are referred to as the \textit{typical} and \textit{specific} mode directions, respectively, expressed as follows: a) {\textit{Typical} Mode Direction:} $\overline{\mathcal{M}}_{v}=\Big[\mathcal{M}_{\text{T}v}|{\bm{X}}_{\text{T}};\text{data}\Big]$; b) \text{\textit{Specific} Mode Direction:} $\widetilde{\mathcal{M}}_{v}=\Big[\mathcal{M}_{\text{S}v}|{\bm{X}}_{\text{H}};\text{data}\Big]$. As stated in Section \ref{sec:AE}, they can also be approximately learned from the MCMC outputs using the introduced posterior summary.

To quantify the covariate effect, we employ the tangent-normal decomposition on the \textit{specific} mode direction $\widetilde{\mathcal{M}}_{v}$ as in Figure \ref{fig:Tangent_Normal} where the \textit{typical} mode direction $\overline{\mathcal{M}}_{v}$ is the tangent direction and the term $\bm{R}_{v}$ is the normal direction. 
Applying this to the principal diffusion directions, the term $\bm{R}_{v}$ can be interpreted as the \textit{deviation} direction that is caused by the difference from the \textit{specific} condition to the \textit{typical} condition.
The scalar $m_{v}\in(0,1)$, which controls the magnitude of this effect, can be used to quantify the importance of this covariate: the larger value of $m_{v}$ indicates the stronger deviation between the \textit{specific} condition to the \textit{typical} condition. 
Computationally, the tangent-normal decomposition can be applied to each MCMC sample $t$. This provides us with the posterior samples of $\bm{R}_{v}$ and $m_{v}$, denoted as $\{\bm{R}_{v}^{(t)}:t=1:T\}$ and $\{m_{v}^{(t)}:t=1:T\}$.
We use those samples to obtain a posterior summary of $[\bm{R}_{v}|\text{data}]$ to spatially profile the \textit{deviation} directions caused by the difference between \textit{typical} condition and \textit{specific} condition. 
Similarly, the posterior mean estimate of $m_{v}$ can also be computed and used to quantify the magnitude. We use the following notation to express the tangent-normal decomposition: $\mathcal{D}_{TN}(\overline{\mathcal{M}}_{v}||\widetilde{\mathcal{M}}_{v})=\{\bm{R}_{v},m_{v}\}$.

Our above illustration provides a general description of tangent-normal decomposition. As long as it is feasible to specify the \textit{typical} and \textit{specific} conditions, they may use $\{\bm{R}_{v},m_{v}\}$ to infer the covariate effects caused by the difference between the \textit{typical} and \textit{specific} conditions. The following are two important remarks that are important for the general use of our approach.

\begin{remark}
Two covariate vectors can be any two conditions where we want to compare the \textit{specific} condition ``to" the \textit{typical} condition which is considered as reference. Usually, ${\bm{X}}_{\text{T}}$ is set to the common characterization.
\end{remark}

\begin{remark}
$\mathcal{D}_{TN}(\overline{\mathcal{M}}_{v}||\widetilde{\mathcal{M}}_{v})$ is not symmetric in the two inputs, i.e., $\mathcal{D}_{TN}(\overline{\mathcal{M}}_{v}||\widetilde{\mathcal{M}}_{v})\not\equiv\mathcal{D}_{TN}(\widetilde{\bm{\mu}}_{v}||\overline{\mathcal{M}}_{v})$.
Thus, the \textit{specific} condition and the \textit{typical} condition cannot be exchanged.
\end{remark}

While we illustrate the use of our approach as a generic use, interpreting one covariate effect at a time is usually a common manner. Therefore, we further provide two examples below to illustrate how to interpret one covariate effect at a time. In summary, the key strategy is to have \textit{typical} and \textit{specific} conditions differ only in the covariate $c'$ of our primary interest.

\begin{example}[Continuous Covariate]
Let $c'$ represent a continuous covariate. The terms $X_{\text{T}c'}$ and $X_{\text{S}c'}$ correspond to the values in the \textit{typical} and the \textit{specific} cases, respectively. When investigating the impact of a single unit increase in the covariate $c'$, with $X_{\text{T}c'}=a$, we assign $X_{\text{S}c'}=a+1$. Concurrently, we ensure $X_{\text{S}c}=X_{\text{T}c}$ for all other $c$ variables.
\end{example}

\begin{example}[Categorical Covariate]
Let $c'$ represent a categorical covariate with $M$ levels, where $m\in{1,\ldots,M}$. The terms $\{X_{\text{T}c',m}\}\in\{0,1\}$ and $\{X_{\text{S}c',m}\}\in\{0,1\}$ denote the corresponding specification for level $m$. Specifically, $X_{\text{T}c',j}=1$ and $X_{\text{S}c',j'}=1$ if the \textit{typical} case is categorized under the $j$-th level and the \textit{specific} condition falls under the $j'$-th level. This is with the condition that $X_{\text{T}c',m}=0$ for all $m\neq j$, and $X_{\text{S}c',m}=0$ for all $k\neq j'$.
In assessing the impact of the $j'$-th level starting from a typical case in the $j$-th level, we set $X_{\text{S}c',j'}=1$ and $X_{\text{T}c',j}=1$, while ensuring $X_{\text{S}c}=X_{\text{T}c}$ for all the other $c\in\{0:C\}$ variables.
\end{example}

\section{Numerical Studies}
\label{sec:num}
In this section, we aim to demonstrate the efficacy of our model compared to traditional methods. Initially, we evaluate our model fitness on synthetic data in Section \ref{sec:comparison}. We generate the synthetic data to mimic the real ADNI data best.

The ADNI initiative was inaugurated in 2003 as a public-private partnership with the primary goal of assessing whether the amalgamation of serial magnetic resonance imaging, positron emission tomography, other biological markers, and clinical as well as neuropsychological assessments could aid in predicting the progression of Alzheimer's disease. This paper specifically focuses on ADNI-2, a project that commenced in September 2011 and spanned five years \citep{aisen2010clinical}.
The Centers for Disease Control and Prevention (\url{https://www.cdc.gov/aging/aginginfo/alzheimers.htm}) suggests that Alzheimer's disease symptoms typically occur after the age of 60. Therefore, we have tailored our study cohort to focus on subjects over 60 years old. The subjects are further classified into four disease states: healthy controls (CN), early mild cognitive impairment (EMCI), late mild cognitive impairment (LMCI), and Alzheimer's Disease (AD). We use $g\in\{\text{CN},\text{EMIC},\text{LMCI},\text{AD}\}$ to denote the group index.

Additionally, we collect subject-level information including age ($X_{i,age}$), gender ($X_{i,gender}$), mini-mental state examination (MMSE) score ($X_{i,MMSE}$), and the Apolipoprotein E (APOE) profile ($X_{i,APOE}$). The MMSE is a performance-based neuropsychological test score ranging from $0$ to $30$. Subjects with cognitive impairments are typically associated with a lower MMSE score.
The APOE is a genetic marker with three major variants ($\epsilon$-2, $\epsilon$-3, and $\epsilon$-4). The $\epsilon$-4 variant is considered the most significant known genetic risk factor for Alzheimer's disease across various ethnic groups \citep{sadigh2012association}. Consequently, our analysis includes APOE-4 as a binary indicator, which denotes the presence of the $\epsilon$-4 variant in subjects.
We also incorporate age and gender into our analysis, given that previous studies have established their correlation with AD \citep{podcasy2022considering}.
In Section \ref{sec:comparison}, we provide the model comparison based on both synthetic data and real ADNI data. In Section \ref{sec:analysis}, we provide the tangent-normal decomposition within Bayesian angular inference to infer the covariate effects. In both sections, we make the design matrix as $  \bm{X}_i=[1, X_{i,age},X_{i,gender}, X_{i,MMSE}, X_{i,APOE}]$.
We further specify the regression terms as $\mathbb{E}\ell\Big(\bm{\mu}_{iv}\Big)=\mathbb{E}[\widetilde{\theta}_{iv},\widetilde{\phi}_{iv}]^T=[\bm{X}_i \bm{\alpha}_{gv}, \bm{X}_i \bm{\beta}_{gv}]^T$ for subject $i$ in clinical group $g$.
In this way, the covariate effects are clinical group-specific.

\subsection{Model Comparison}
\label{sec:comparison}
In this section, we carry out numerical analyses on synthetic principal diffusion direction data. The purpose of these studies is to demonstrate the superior performance of our proposed vMF regression relative to other methods. Here, we consider three competing methods based on multivariate Gaussian distribution: 1) \textit{Gaussian Regression 1:} a multivariate regression model for the principal diffusion directions $\bm{E}_{iv}$, 2) \textit{Gaussian Regression 2:} a multivariate regression model for transformed means $\ell\Big(\bm{E}_{iv}\Big)$, 3) \textit{Non-spatial vMF regression:} $\bm{Q}^T\bm{E}_{iv}\sim\text{vMF}\big(\mathcal{M}_{iv},\kappa\big)$ and $\ell\Big(\bm{\mu}_{iv}\Big)=[\widetilde{\theta}_{iv},\widetilde{\phi}_{iv}]^T=[\bm{X}_i \bm{\alpha}_{gv}, \bm{X}_i \bm{\beta}_{gv}]^T$. Specific details are described as follows.
The \textit{Gaussian Regression 1} simply treats the principal diffusion directions $\bm{E}_{iv}$ as normally distributed random variables, such that $\bm{E}_{iv}\sim\mathcal{N}(\mathcal{M}_{iv},\bm{\Sigma}_1)$, where $\mathcal{M}_{iv}^{(p)}=\bm{X}_{i}\bm{u}_{v}^{(p)}$, $\mathcal{M}_{iv}^{(p)}$ is the $p$-th element of $\mathcal{M}_{iv}$ for $p\in\{1,2,3\}$, i.e., $\mathcal{M}_{iv}=(\mathcal{M}_{iv}^{(1)},\mathcal{M}_{iv}^{(2)},\mathcal{M}_{iv}^{(3)})$. 
Alternatively, in case of \textit{Gaussian Regression 2:}, the terms $\ell\Big(\bm{E}_{iv}\Big)=[A_{iv}, B_{iv}]^T$ are assumed to follow a normal distribution, denoted as $\ell\Big(\bm{E}_{iv}\Big)=[A_{iv}, B_{iv}]^T\sim\mathcal{N}([\tilde{\theta}_{iv},\tilde{\phi}_{iv}]^T,\bm{\Sigma}_2)$, where $\tilde{\theta}_{iv}=\bm{X}_{i}\bm{a}_{v}$ and  $\tilde{\phi}_{iv}=\bm{X}_{i}\bm{b}_{v}$. In case of \textit{Non-spatial vMF Regression:}, we simply remove the spatially dependent randomness from the proposed model.

We take the Bayesian route for inference for all these methods and put weakly informative priors on the unknown parameters. Specifically, we give normal priors with mean $\bm{0}$ and covariance matrix $1000\times\bm{I}$ on the coefficients, denoted as $\bm{u}_{v}^{(1)}, \bm{u}_{v}^{(2)}, \bm{u}_{v}^{(3)},\bm{a}_{v}, \bm{b}_{v}\sim\mathcal{N}(\bm{0},1000\times\bm{I})$. We further put inverse Wishart prior with degrees of freedom $5$ and scale matrix $\bm{I}$ on the covariance matrices, denoted as $\bm{\Sigma}_1^{-1},\bm{\Sigma}_2^{-1}\sim\mathcal{W}(5,\bm{I})$. Our inference is based on the posterior samples, generated by applying the MCMC algorithm, where we collect $3,000$ post-burn samples after discarding the first $2,000$. The MCMC convergence is monitored by Heidelberger and Welch’s convergence diagnostic \citep{heidelberger1981spectral}.

We generate synthetic principal diffusion direction data using our proposed vMF regression model. We design a synthetic fiber tract that includes $K=12$ fibers and a total of $1,256$ voxels. We also have covariate vectors for 20 subjects for training and 20 subjects for validation. For each data replication, we fit our model to the training data to obtain the parameter estimates, and these parameter estimates are used to predict the response in the validation data. The discrepancies between the predicted responses and true responses are used to measure model accuracy.

The tractography atlas, voxel-wise coefficients, and covariate vectors of training/validation subjects are stored in the R file \code{synthetic.Rdata} in the code repository as \code{Fiber$\_$Ind}, \code{COEF}, and \code{Training}/\code{Validation}, respectively. For other parameters, we set $P=5$ and generate partial correlation coefficients from $\mathcal{U}(-1,1)$. The variances are specified as $\tau_{\epsilon}=\tau_{\xi}=\sigma_{\alpha}=\sigma_{\beta}=0.5$, and orthogonal matrix is specified as $\bm{Q}=\bm{I}$. We further give $\kappa\in\{20,30,40\}$, respectively. A total of 50 replicated data sets for each setting. For more information, please check the code repository (\code{Spatial$\_$VMF$\_$Regression.zip}). The code for implementing our proposed method is available on GitHub: \url{https://github.com/lanzhouBWH/Spatial_VMF_Regression}.

Prediction errors are quantified in terms of a) the separation angle, and b) the root mean squared error. The separation angle is defined as $\sum_{i,v}\delta\Big(\widehat{\mathcal{M}}_{iv},\hat{\bm{Q}}^T\bm{E}_{iv}\Big)/(VN)$ for vMF regression and $\sum_{i,v}\delta\Big(\widehat{\mathcal{M}}_{iv},\bm{E}_{iv}\Big)/(VN)$ for Gaussian regression methods, where $\delta(\bm{a},\bm{b})$ measures the separation angle between vector $\bm{a}$ and vector $\bm{b}$. The root mean squared error is defined as $\sum_{i,v}\|\widehat{\mathcal{M}}_{iv}-\hat{\bm{Q}}^T\bm{E}_{iv}\|/(VN)$ for vMF regression and $\sum_{i,v}\|\widehat{\mathcal{M}}_{iv}/\|\widehat{\mathcal{M}}_{iv}\|-\bm{E}_{iv}\|/(VN)$ for Gaussian case.
 We give the prediction errors spanning simulation replications in Figure \ref{fig:result}. Our proposed method substantially outperforms Gaussian alternatives and the non-spatial vMF regression model based on these prediction errors. The spatial vMF regression with $P=5$ demonstrates comparatively better performance among $P\in\{2,\ldots,7\}$.
 Here, true data is generated by setting $P=5$. It suggests that the optimal lag can be chosen based on the predictive performance for our real data analysis.

\begin{figure}
    \centering
    \includegraphics[width=0.45\textwidth]{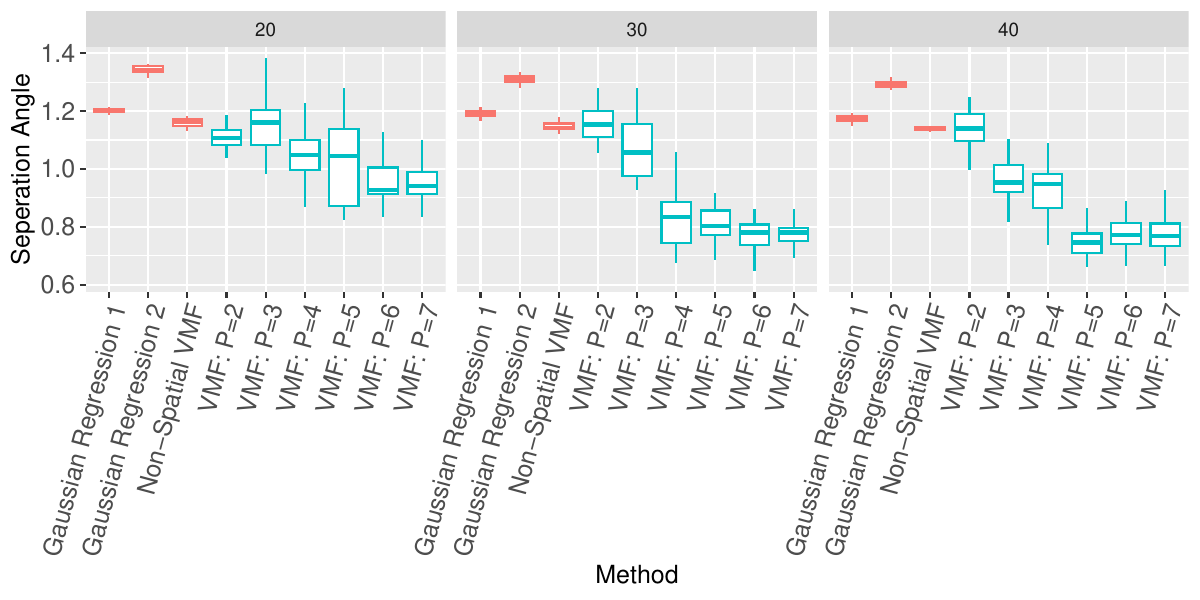}
    \includegraphics[width=0.45\textwidth]{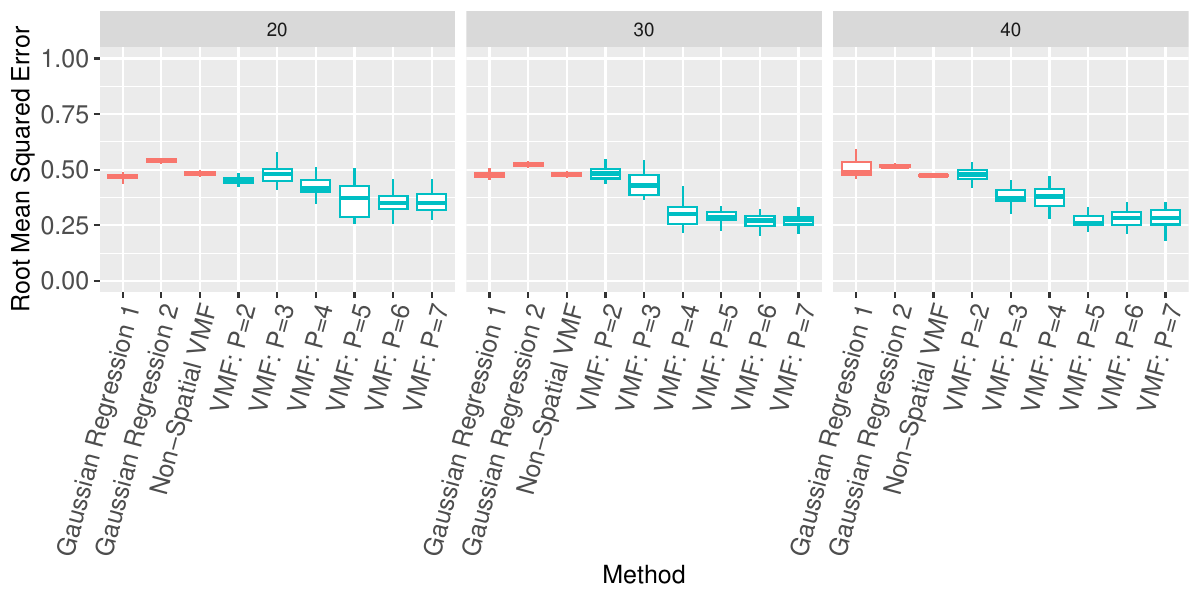}
    \caption{Comparison of prediction errors quantified in terms of separation angles (Left Panel) and root-mean-squared errors (Right Panel) between the observed and predicted means. For the proposed vMF model, errors for different choices of spatial lags from 2 to 7 are provided.}
    \label{fig:result}
\end{figure}

\subsection{Bayesian Angular Inference for ADNI Data}
\label{sec:analysis}
In this section, we utilize our Bayesian angular inference method to analyze data from the Alzheimer's Disease Neuroimaging Initiative (ADNI). We illustrate the effects of covariates via our proposed tangent-normal decomposition. Our techniques shed fresh light on the neurological progression of Alzheimer's disease. To maintain a specific emphasis in our study, we concentrate on examining the influence of the APOE-4 allele on brain anatomical structure. This analysis takes into account other covariates such as age, gender, and scores from the Mini-Mental State Examination (MMSE). We focus on the fornix in this section.

To achieve this goal, the \textit{typical} covariate vector is given as follows
\begin{equation}
\label{eq:male}
    \begin{aligned}
        \bar{\bm{X}}_{T}^{(g)}=[1, \bar{X}_{age}^{(g)},\bar{X}_{T,gender}=0, \bar{X}_{MMSE}^{(g)}, X_{T,APOE}=0]
    \end{aligned}
\end{equation}
or
\begin{equation}
\label{eq:female}
    \begin{aligned}
        \bar{\bm{X}}_{T}^{(g)}=[1, \bar{X}_{age}^{(g)},\bar{X}_{T,gender}=1, \bar{X}_{MMSE}^{(g)}, X_{T,APOE}=0]
    \end{aligned}
\end{equation}
where Equations (\ref{eq:male}) and (\ref{eq:female}) are the \textit{typical} covariate vectors of male and female, respectively. We consider the wild type of APOE-4 (i.e., $X_{T,APOE}=0$) as \textit{typical}. The terms $\bar{X}_{age}^{(g)}$ and $\bar{X}_{MMSE}^{(g)}$ are the averages of age and MMSE within group $g$. We decompose the gender specification into male and female, in order to provide the gender-specific inference. The corresponding \textit{specific} covariate vector is given as
\begin{equation}
\label{eq:male1}
    \begin{aligned}
        \bar{\bm{X}}_{S}^{(g)}=[1, \bar{X}_{age}^{(g)},\bar{X}_{T,gender}=0, \bar{X}_{MMSE}^{(g)}, X_{S,APOE}=1]
    \end{aligned}
\end{equation}
or
\begin{equation}
\label{eq:female1}
    \begin{aligned}
        \bar{\bm{X}}_{S}^{(g)}=[1, \bar{X}_{age}^{(g)},\bar{X}_{s,gender}=1, \bar{X}_{MMSE}^{(g)}, X_{S,APOE}=1],
    \end{aligned}
\end{equation}
where the APOE-4 status is considered as mutated (i.e., $X_{S,APOE}=1$).

To incorporate group-level analysis into our study, we use the superscript $(g)$ to denote group-specific terms. In this section, we focused on the investigation of the effect of APOE-4. The posterior $[m_{v}^{(g)} |\text{data}]$ implies the strength of the covariate effect and we consider $m_{v}^{(g)}>0.65$ as large impact. Therefore, we calculate the posterior probabilities for $Pr(m_{v}^{(g)}-0.65|\text{data})$.
As an exploratory analysis, we consider the voxels whose posterior z-value is larger than the 9th quantile of the posterior z-values over voxels and group as \textit{impacted} voxels.
The visualization of the ADNI group is in Figure \ref{fig:Directional_Effects}. The ADNI group's fornix anatomical structure is more impacted by the factor of APOE. Also, males have a higher laterality of the APOE effect than women. Male's left component of the fornix is more impacted by the factor of APOE.

\begin{figure}
    \centering
    \includegraphics[width=0.5\textwidth]{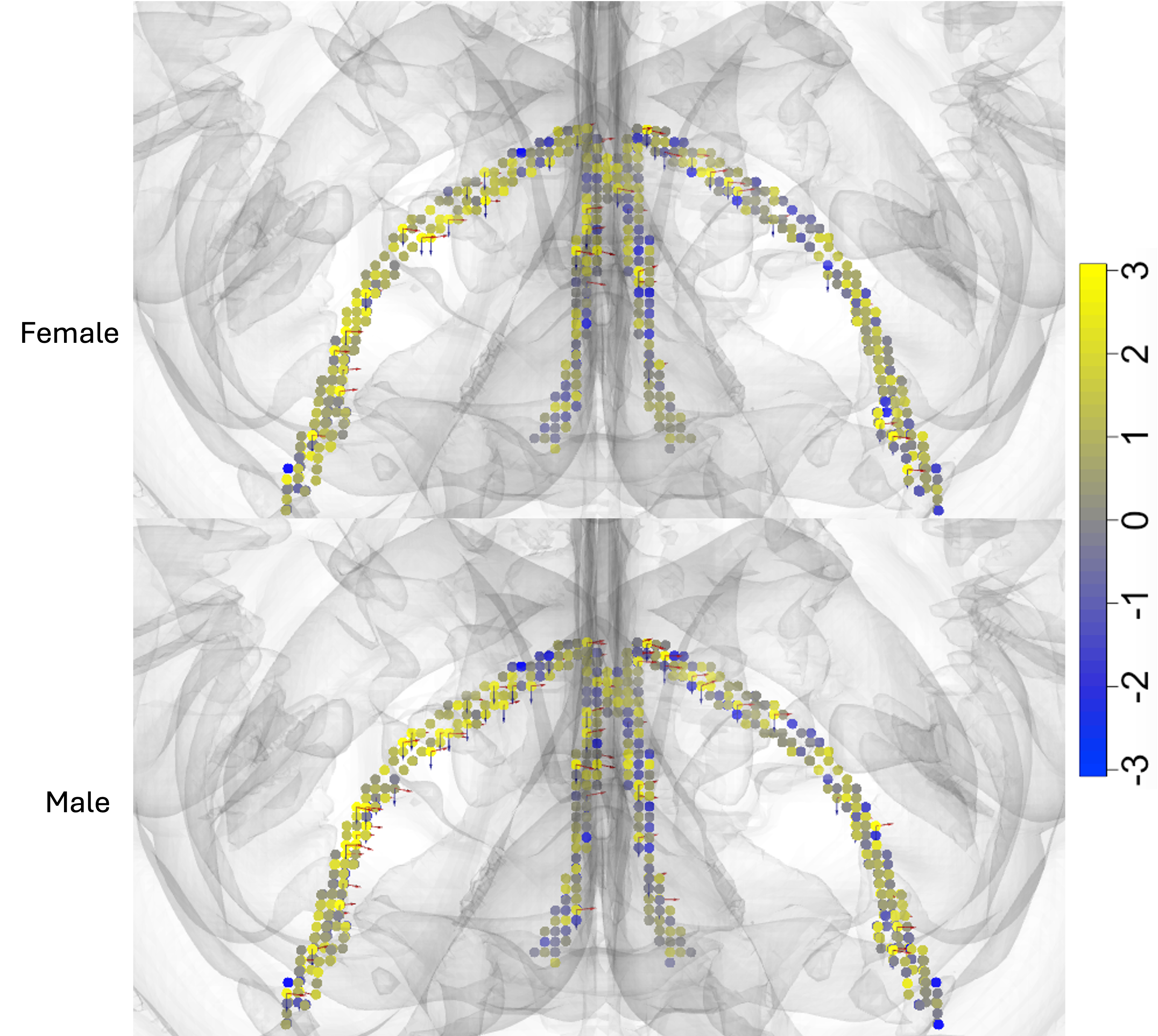}
    \caption{We present a tangent-normal decomposition of APOE-4 with a specific focus on the fornix as our region of interest. The upper panel is for females and the lower panel is for males. The colors are for the posterior z-scores. For \textit{impacted} voxels, the red arrows are for \textit{typical} mode directions and the blue arrows are for \textit{specific} mode directions.}
    \label{fig:Directional_Effects}
\end{figure}

\section{Conclusion and Discussion}
\label{sec:conclusion}
This paper presents a novel spatial generalized linear regression framework for modeling spatially varying directional data as an outcome. The framework leverages a vMF-distributed error model, shown to capture local, covariate-dependent variations accurately.
Given the directional nature of the data, we employ an autoregressive framework to capture spatial variations. Numerical evaluations, performed on both real and synthetic data, demonstrate our proposed method's superior performance. By applying our proposed Bayesian angular inference scheme to the ADNI data, we have gleaned several novel scientific insights.

Our current analysis is cross-sectional. However, the exploration of longitudinal changes, specific to our ADNI study, akin to \cite{wang2019hierarchical,kundu2019novel}, could be intriguing for identifying underlying mechanistic changes over time. Nonetheless, such covariate-dependent longitudinal research poses significant challenges. For instance, brain images and other biomarkers are often measured asynchronously \citep{li2022regression}. In future work, extending our generalized vMF regression to accommodate asynchronously collected longitudinal data would be an interesting direction to pursue. Furthermore, our proposed model can be seamlessly integrated into other popular inference frameworks, such as mediation analysis, providing additional avenues of interest.
In the context of analyzing DTI data, it will be interesting to incorporate fiber-shape information for more holistic analysis \citep{yeh2020shape}.
Future research will explore such a direction.

One of the most important aspects of our proposed model is the incorporation of spatial dependence. Thus, one may consider other modeling choices for the directional distribution and link functions proposed in the literature \citep{scealy2017directional, scealy2019scaled,paine2020spherical} and integrate spatial dependence. The proposed autoregressive spatial dependence can still be applied seamlessly to these other modeling choices. It is thus an important future research avenue to compare these different modeling choices holistically for spatially dependent directional data. Another direction of future research is to consider a scale mixture of vMF distribution to characterize the data, which will allow different subjects to have different $\kappa$'s. Another generalization is possible in the specification of $\kappa$. We have now fixed it for all the spatial locations. It might be interesting to allow $\kappa$ to vary slowly. Furthermore, unimodality might be a restrictive assumption.

\section*{Acknowledgment}
The authors gratefully acknowledge the Editor, Associate Editor, and reviewers for their valuable feedback. This work utilized data from the Alzheimer’s Disease Neuroimaging Initiative (ADNI) database (adni.loni.usc.edu). The ADNI investigators were responsible for the design and implementation of the study and/or provided data, but they did not participate in the analysis or writing of this manuscript. A full list of ADNI investigators is available at: \url{https://adni.loni.usc.edu/wp-content/uploads/how_to_apply/ADNI_Acknowledgement_List.pdf}. Zhou Lan thanks the salary support from the Sundry Fund of the Brigham and Women's Hospital Department of Radiology.

Data collection and sharing for this project was funded by the Alzheimer's Disease Neuroimaging Initiative (ADNI) (National Institutes of Health Grant U01 AG024904) and DOD ADNI (Department of Defense award number W81XWH-12-2-0012). ADNI is funded by the National Institute on Aging, the National Institute of Biomedical Imaging and Bioengineering, and through generous contributions from the following: AbbVie, Alzheimer's Association; Alzheimer's Drug Discovery Foundation; Araclon Biotech; BioClinica, Inc.; Biogen; Bristol-Myers Squibb Company; CereSpir, Inc.; Cogstate; Eisai Inc.; Elan Pharmaceuticals, Inc.; Eli Lilly and Company; EuroImmun; F. Hoffmann-La Roche Ltd and its affiliated company Genentech, Inc.; Fujirebio; GE
Healthcare; IXICO Ltd.; Janssen Alzheimer's Immunotherapy Research \& Development, LLC.; Johnson \& Johnson Pharmaceutical Research \& Development LLC.; Lumosity; Lundbeck; Merck \& Co., Inc.; Meso
Scale Diagnostics, LLC.; NeuroRx Research; Neurotrack Technologies; Novartis Pharmaceuticals Corporation; Pfizer Inc.; Piramal Imaging; Servier; Takeda Pharmaceutical Company; and Transition
Therapeutics. The Canadian Institutes of Health Research is providing funds to support ADNI clinical sites in Canada. Private sector contributions are facilitated by the Foundation for the National Institutes of Health ({\tt www.fnih.org}). The grantee organization is the Northern California Institute for Research and Education,
and the study is coordinated by the Alzheimer's Therapeutic Research Institute at the University of Southern California. ADNI data are disseminated by the Laboratory for Neuro Imaging at the University of Southern California.


\end{document}